\gdef\@fpheader{ }
\gdef\@journal{ }
\newif\ifnotoc\notocfalse
\newif\ifemailadd\emailaddfalse
\newif\iftoccontinuous\toccontinuousfalse
\def\@subheader{\@empty}
\def\@keywords{\@empty}
\def\@abstract{\@empty}
\def\@xtum{\@empty}
\def\@dedicated{\@empty}
\def\@arxivnumber{\@empty}
\def\@collaboration{\@empty}
\def\@collaborationImg{\@empty}
\def\@proceeding{\@empty}
\def\@preprint{\@empty}
\newcommand{\subheader}[1]{\gdef\@subheader{#1}}
\newcommand{\keywords}[1]{\if!\@keywords!\gdef\@keywords{#1}\else%
\PackageWarningNoLine{\jname}{Keywords already defined.\MessageBreak Ignoring last definition.}\fi}
\renewcommand{\abstract}[1]{\gdef\@abstract{#1}}
\newcommand{\dedicated}[1]{\gdef\@dedicated{#1}}
\newcommand{\arxivnumber}[1]{\gdef\@arxivnumber{#1}}
\newcommand{\proceeding}[1]{\gdef\@proceeding{#1}}
\newcommand{\xtumfont}[1]{\textsc{#1}}
\newcommand{\correctionref}[3]{\gdef\@xtum{\xtumfont{#1} \href{#2}{#3}}}
\newcommand\jname{JHEP}
\newcommand\acknowledgments{\section*{Acknowledgments}}
\newcommand\preprint[1]{\gdef\@preprint{\hfill #1}}
\newtheorem{theorem}{Theorem}
\newenvironment{proof}[1][Proof]{\noindent\textbf{#1.} }{\ \rule{0.5em}{0.5em}}
\newcommand\note[2][]{%
\if!#1!%
\stepcounter{footnote}\footnotetext{#2}%
\else%
{\renewcommand\thefootnote{#1}%
\footnotetext{#2}}%
\fi}
\newtoks\auth@toks
\renewcommand{\author}[2][]{%
  \if!#1!%
    \auth@toks=\expandafter{\the\auth@toks#2\ }%
  \else
    \auth@toks=\expandafter{\the\auth@toks#2$^{#1}$\ }%
  \fi
}
\newtoks\affil@toks\newif\ifaffil\affilfalse
\newcommand{\affiliation}[2][]{%
\affiltrue
  \if!#1!%
    \affil@toks=\expandafter{\the\affil@toks{\item[]#2}}%
  \else
    \affil@toks=\expandafter{\the\affil@toks{\item[$^{#1}$]#2}}%
  \fi
}
\newtoks\email@toks\newcounter{email@counter}%
\newcommand{\emailAdd}[1]{%
\emailaddtrue%
\ifnum\theemail@counter>0\email@toks=\expandafter{\the\email@toks, \@email{#1}}%
\else\email@toks=\expandafter{\the\email@toks\@email{#1}}%
\fi\stepcounter{email@counter}}
\newcommand{\@email}[1]{\href{mailto:#1}{\tt #1}}
\newcommand*\collaboration[1]{\gdef\@collaboration{#1}}
\newcommand*\collaborationImg[2][]{\gdef\@collaborationImg{#2}}
\newcommand\afterLogoSpace{\smallskip}
\newcommand\afterSubheaderSpace{\vskip3pt plus 2pt minus 1pt}
\newcommand\afterProceedingsSpace{\vskip21pt plus0.4fil minus15pt}
\newcommand\afterTitleSpace{\vskip23pt plus0.06fil minus13pt}
\newcommand\afterRuleSpace{\vskip23pt plus0.06fil minus13pt}
\newcommand\afterCollaborationSpace{\vskip3pt plus 2pt minus 1pt}
\newcommand\afterCollaborationImgSpace{\vskip3pt plus 2pt minus 1pt}
\newcommand\afterAuthorSpace{\vskip5pt plus4pt minus4pt}
\newcommand\afterAffiliationSpace{\vskip3pt plus3pt}
\newcommand\afterEmailSpace{\vskip16pt plus9pt minus10pt\filbreak}
\newcommand\afterXtumSpace{\par\bigskip}
\newcommand\afterAbstractSpace{\vskip16pt plus9pt minus13pt}
\newcommand\afterKeywordsSpace{\vskip16pt plus9pt minus13pt}
\newcommand\afterArxivSpace{\vskip3pt plus0.01fil minus10pt}
\newcommand\afterDedicatedSpace{\vskip0pt plus0.01fil}
\newcommand\afterTocSpace{\bigskip\medskip}
\newcommand\afterTocRuleSpace{\bigskip\bigskip}
\newlength{\affiliationsSep}\setlength{\affiliationsSep}{-3pt}
\newcommand\beforetochook{\pagestyle{myplain}\pagenumbering{roman}}
\DeclareFixedFont\trfont{OT1}{phv}{b}{sc}{11}
\renewcommand\maketitle{
\pagestyle{empty}
\thispagestyle{titlepage}
\setcounter{page}{0}
\noindent{\small\scshape\@fpheader}\@preprint\par

\afterLogoSpace
\if!\@subheader!\else\noindent{\trfont{\@subheader}}\fi
\afterSubheaderSpace
\if!\@proceeding!\else\noindent{\sc\@proceeding}\fi
\afterProceedingsSpace
{\LARGE\flushleft\sffamily\bfseries\@title\par}
\afterTitleSpace
\hrule height 1.5\p@%
\afterRuleSpace
\if!\@collaboration!\else
{\Large\bfseries\sffamily\raggedright\@collaboration}\par
\afterCollaborationSpace
\fi
\if!\@collaborationImg!\else
{\normalsize\bfseries\sffamily\raggedright\@collaborationImg}\par
\afterCollaborationImgSpace
\fi
{\bfseries\raggedright\sffamily\the\auth@toks\par}
\afterAuthorSpace
\ifaffil\begin{list}{}{%
\setlength{\leftmargin}{0.28cm}%
\setlength{\labelsep}{0pt}%
\setlength{\itemsep}{\affiliationsSep}%
\setlength{\topsep}{-\parskip}}
\itshape\small%
\the\affil@toks
\end{list}\fi
\afterAffiliationSpace
\ifemailadd 
\noindent\hspace{0.28cm}\begin{minipage}[l]{.9\textwidth}
\begin{flushleft}
\textit{E-mail:} \the\email@toks
\end{flushleft}
\end{minipage}
\else 
\PackageWarningNoLine{\jname}{E-mails are missing.\MessageBreak Plese use \protect\emailAdd\space macro to provide e-mails.}
\fi
\afterEmailSpace
\if!\@xtum!\else\noindent{\@xtum}\afterXtumSpace\fi
\if!\@abstract!\else\noindent{\renewcommand\baselinestretch{.9}\textsc{Abstract:}}\ \@abstract\afterAbstractSpace\fi
\if!\@keywords!\else\noindent{\textsc{Keywords:}} \@keywords\afterKeywordsSpace\fi
\if!\@arxivnumber!\else\noindent{\textsc{ArXiv ePrint:}} \href{http://arxiv.org/abs/\@arxivnumber}{\@arxivnumber}\afterArxivSpace\fi
\if!\@dedicated!\else\vbox{\small\it\raggedleft\@dedicated}\afterDedicatedSpace\fi
\ifnotoc\else
\iftoccontinuous\else\newpage\fi
\beforetochook\hrule
\tableofcontents
\afterTocSpace
\hrule
\afterTocRuleSpace
\fi
\setcounter{footnote}{0}
\pagestyle{myplain}\pagenumbering{arabic}
} 
\renewcommand{\baselinestretch}{1.1}\normalsize
\renewcommand{\@dotsep}{10000}
\newcommand\ps@myplain{
\pagenumbering{arabic}
\renewcommand\@oddfoot{\hfill-- \thepage\ --\hfill}
\renewcommand\@oddhead{}}
\let\ps@plain=\ps@myplain
\newcommand\ps@titlepage{\renewcommand\@oddfoot{}\renewcommand\@oddhead{}}
\numberwithin{equation}{section}
\renewcommand\section{\@startsection{section}{1}{\z@}%
                                   {-3.5ex \@plus -1.3ex \@minus -.7ex}%
                                   {2.3ex \@plus.4ex \@minus .4ex}%
                                   {\normalfont\large\bfseries}}
\renewcommand\subsection{\@startsection{subsection}{2}{\z@}%
                                   {-2.3ex\@plus -1ex \@minus -.5ex}%
                                   {1.2ex \@plus .3ex \@minus .3ex}%
                                   {\normalfont\normalsize\bfseries}}
\renewcommand\subsubsection{\@startsection{subsubsection}{3}{\z@}%
                                   {-2.3ex\@plus -1ex \@minus -.5ex}%
                                   {1ex \@plus .2ex \@minus .2ex}%
                                   {\normalfont\normalsize\bfseries}}
\renewcommand\paragraph{\@startsection{paragraph}{4}{\z@}%
                                   {1.75ex \@plus1ex \@minus.2ex}%
                                   {-1em}%
                                   {\normalfont\normalsize\bfseries}}
\renewcommand\subparagraph{\@startsection{subparagraph}{5}{\parindent}%
                                   {1.75ex \@plus1ex \@minus .2ex}%
                                   {-1em}%
                                   {\normalfont\normalsize\bfseries}}
\def\fnum@figure{\textbf{\figurename\nobreakspace\thefigure}}
\def\fnum@table{\textbf{\tablename\nobreakspace\thetable}}
\long\def\@makecaption#1#2{%
  \vskip\abovecaptionskip
  \sbox\@tempboxa{\small #1. #2}%
  \ifdim \wd\@tempboxa >\hsize
    \small #1. #2\par
  \else
    \global \@minipagefalse
    \hb@xt@\hsize{\hfil\box\@tempboxa\hfil}%
  \fi
  \vskip\belowcaptionskip}
\renewenvironment{thebibliography}[1]{%
\begin{oldthebibliography}{#1}%
\small%
\raggedright%
\setlength{\itemsep}{5pt plus 0.2ex minus 0.05ex}%
}%
{%
\end{oldthebibliography}%
}
\begin{document}



\renewcommand{\thefootnote}{\fnsymbol{footnote}}

\title{\boldmath Calculating eigenvalues of many-body systems from partition functions}%

\author[]{Chi-Chun Zhou}
\author[*]{and Wu-Sheng Dai}\note{daiwusheng@tju.edu.cn.}


\affiliation[]{Department of Physics, Tianjin University, Tianjin 300350, P.R. China}









\abstract{A method for calculating the eigenvalue of a many-body system without solving
the eigenfunction is suggested. In many cases, we only need the knowledge of
eigenvalues rather than eigenfunctions, so we need a method solving only the
eigenvalue, leaving alone the eigenfunction. In this paper, the method is
established based on statistical mechanics. In statistical mechanics,
calculating thermodynamic quantities needs only the knowledge of eigenvalues
and then the information of eigenvalues is embodied in thermodynamic
quantities. The method suggested in the present paper is indeed a method for
extracting the eigenvalue from thermodynamic quantities. As applications, we
calculate the eigenvalues for some many-body systems. Especially, the method
is used to calculate the quantum exchange energies in quantum many-body
systems. Using the method, we also\ calculate the influence of the topological
effect on eigenvalues. Moreover, we improve the result of the relation between
the counting function and the heat kernel in literature.
}

\maketitle
\flushbottom


\section{Introduction}

Calculating the eigenvalue of a many-body system is often difficult. The most
direct way is to solve the eigenequation of the Hamiltonian $H$,
\begin{equation}
H\left\vert \phi_{n}\right\rangle =E_{n}\left\vert \phi_{n}\right\rangle .
\label{eigenequation}%
\end{equation}
Once the eigenequation is solved, both the eigenvalue $E_{n}$ and the
eigenfunction $\phi_{n}$ are solved simultaneously. Nevertheless, often one
only needs the knowledge of eigenvalues $\{E_{n}\}$ rather than
eigenfunctions. In such cases, solving eigenfunctions is redundant. This
inspires us to develop an approach which only focus on solving eigenvalues.

Statistical mechanics is essentially an averaging method. In statistical
mechanics the information of eigenfunctions is statistically averaged out;
calculating thermodynamic quantities needs only eigenvalues. For example, in
canonical ensembles, all the thermodynamic property of an $N$-particle system
is embodied in the canonical partition function which is determined only by
eigenvalues $\{E_{n}\}$:
\begin{equation}
Z\left(  \beta\right)  =\sum_{n}e^{-\beta E_{n}}, \label{partition_function}%
\end{equation}
where $\beta=\frac{1}{kT}$ with $k$ the Boltzmann constant and $T$ the temperature.

The knowledge of eigenvalues is embodied in the canonical partition function
$Z\left(  \beta\right)  $. Our problem is how to extract the eigenvalue
$E_{n}$ from the partition function (\ref{partition_function}), or, more
generally, from thermodynamic quantities.

The canonical partition function (\ref{partition_function}), from another
perspective, is the global heat kernel of the Hamiltonian operator $H$. The
global heat kernel%
\begin{equation}
K\left(  t\right)  =\sum_{n}e^{-tE_{n}} \label{kernel}%
\end{equation}
is the trace of the local heat kernel which is the Green function of the
initial-value problem of the heat-type equation \cite{vassilevich2003heat}.
Obviously, the global heat kernel is just the canonical partition function
with the replacement $t\rightarrow\beta$.

Recently, a relation between the heat kernel and the spectral counting
function is revealed \cite{dai2009number,dai2010approach}. The counting
function $N\left(  E\right)  $ counts the number of the eigenstates whose
eigenvalues are smaller than $E$. The relation between the heat kernel and the
counting function allows us to calculate the counting function $N\left(
E\right)  $ from the heat kernel $K\left(  t\right)  $, or, the canonical
partition function $Z\left(  \beta\right)  $, directly.

The eigenvalue can be calculated from the counting function $N\left(
E\right)  $ \cite{dai2009number}. This implies that one can calculate the
eigenvalue from the canonical partition function $Z\left(  \beta\right)  $ or
other thermodynamic quantities.

It is often difficult to calculate the eigenvalue of noninteracting quantum
systems and interacting classical and quantum systems. In noninteracting
quantum systems, there exist quantum exchange interactions; in classical
interacting systems, there exist classical inter-particle interactions, and in
quantum interacting systems, there exist both classical inter-particle
interactions and quantum exchange interactions \cite{dai2005hard}. The method
developed in the present paper allows us to calculated eigenvalues from the
thermodynamic quantity which is obtained by the statistical mechanical method.

In quantum many-body systems, the most important factor is the quantum
exchange interaction. The effect of quantum exchange interaction in eigenvalue
is the exchange energy. In statistical mechanics, the quantum exchange effect
is taken into account by simply employing Bose-Einstein statistics,
Fermi-Dirac statistics, and various kinds of intermediate statistics through
imposing various maximum occupation numbers: $\infty$ for Bose-Einstein
statistics, $1$ for Fermi-Dirac statistics, and an integer $n$ for Gentile
statistics
\cite{gentile1940itosservazioni,khare2005fractional,dai2012calculating,algin2017bose,dai2009intermediate,maslov2017relationship,dai2004gentile,dai2009exactly}%
. The method suggested in the present paper allows us to calculate the
exchange energy in eigenvalues from the partition function obtained in
statistical mechanics. In other words, we can calculate the exchange energy in
virtue of the statistical mechanics. For example, in quantum ideal and
interacting gases, the contribution of the exchange energy to the eigenvalue
is represented by the second virial coefficient which can be obtained by
statistical mechanical method.

In quantum mechanics, the exchange energy is reckoned in by symmetrizing or
antisymmetrizing the wavefunctions, in quantum filed theory, the exchange
energy is reckoned in by imposing the quantization condition on the fields,
while, in statistical mechanics, the exchange energy is reckoned in by only
simply setting the value of the maximum occupation number, since in
statistical mechanics only the information of eigenvalues other than the
information of wavefunctions is needed. That is to say, in statistical
mechanics, the exchange energy can be calculated relatively simply. Therefore,
calculating the exchange energy through statistical mechanics is a more simple approach.

Moreover, using the method, we calculate the influence of the topological
effect on eigenvalues for two-dimensional non-interacting classical and
quantum systems. In two-dimensional systems, the topological property is
described by connectivity which is described by the Euler-Poincar\'{e}
characteristic number \cite{kac1966can}.

In statistical mechanics, there are many solved models, in which the partition
functions are solved exactly or approximately. Using the method, we calculate
the eigenvalues for such models from the solved partition functions. In this
paper, we consider classical and quantum non-interacting systems, classical
interacting systems with the Lennard-Jones interaction and quantum interacting
systems with the hard-sphere interaction, the one-dimensional Ising models,
and the one-dimensional Potts model.

There are many studies devoted to the problem of eigenvalue spectra, such as
the eigenvalue spectrum of the Rabi model \cite{maciejewski2014full}, the
eigenvalue spectrum of the open spin-1/2 XXZ quantum chains with non-diagonal
boundary terms \cite{faldella2014complete}, the eigenvalue spectrum of the
antiperiodic spin-1/2 XXZ quantum chains \cite{niccoli2013antiperiodic}, the
statistical property of the eigenvalue spectrum
\cite{mierzejewski2013eigenvalue,lev2015absence}, the structure of the
eigenvalue spectrum \cite{callias1977spectra,christandl2014eigenvalue}, the
ground-state energy of the Heisenberg-Ising lattice \cite{yang1966ground}, the
ground state and the excited state of many-body localized Hamiltonians
\cite{yu2017finding}, and the ground state energy of a system of $N$ bosons
\cite{lewin2015bogoliubov}.

In statistical mechanics, many methods are developed for the calculation of
partition functions. For example, the canonical partition function for
quon\ statistics \cite{goodison1994canonical}, general formulas for the
canonical partition function of parastatistical systems
\cite{chaturvedi1996canonical}, the canonical partition function of the freely
jointed chain model \cite{mazars1998canonical}, the partition function of the
interacting calorons ensemble \cite{deldar2016partition}, the algorithm for
computing the exact partition function of lattice polymer models
\cite{hsieh2016efficient}, the exact partition function for the $q$-state
Potts Model \cite{chang2015exact}, the partition function for the
antiferromagnetic Ising model and the hard-core models
\cite{galanis2016inapproximability}, and the canonical partition functions for
different gaseous systems \cite{zhou2018canonical} are investigated.

The relation between the counting function and the heat kernel is the basics
of the method used in the present paper. However, the relation given by Ref.
\cite{dai2009number} neglects a special case. In this paper, we improve the
result in Ref. \cite{dai2009number}.

This paper is organized as follows. In section \ref{method}, we describe the
method of calculating the eigenvalue from the canonical partition function. In
section \ref{illustration}, we illustrate the method by examples. In sections
\ref{identicalbox} and \ref{identicalexternal}, we calculate the eigenvalue,
especially the exchange energy, of identical particles in a box and in an
external field. In section \ref{topology}, we calculate the influence of the
topological effect on eigenvalues. In sections \ref{Lennard-Jones} and
\ref{hard-sphere}, we calculate the eigenvalue of interacting particles with
the Lennard-Jones potential and the hard-sphere potential. In sections
\ref{Ising} and \ref{Potts}, we calculate the eigenvalues of the Ising system
and the Potts system. Conclusions are summarized in section \ref{conclusion}.
In the appendix, we provide a complete result for the relation between the
counting function and the heat kernel.

\section{Calculating eigenvalues from partition functions \label{method}}

In mechanics, all the dynamical informations are embodied in the Hamiltonian
$H$.

When regarding a many-body system as a mechanical system, one describes the
system by the solution of the eigenequation $H\left\vert \phi_{n}\right\rangle
=E_{n}\left\vert \phi_{n}\right\rangle $. The solution of the eigenfunction,
the eigenvalues $E_{n}$ and the eigenvectors $\left\vert \phi_{n}\right\rangle
$, contains all the informations of the Hamiltonian $H$. In fact, the
Hamiltonian $H$ can be reconstructed by the spectral representation as
$H=\sum_{n}E_{n}\left\vert \phi_{n}\right\rangle \left\langle \phi
_{n}\right\vert $.

When solving an eigenequation is difficult, we can regard a many-body system
as a thermodynamic system paying the price of losing the information of the
eigenvectors $\left\vert \phi_{n}\right\rangle $. A thermodynamic system can
be completely described by, e.g., in canonical ensembles, the partition
function $Z\left(  \beta\right)  =\operatorname*{tr}e^{-\beta H}$. In a
thermodynamic description, only the information is taken into account and the
information of the wavefunction are averaged out.

In a word, any mechanical system corresponds a thermodynamic system which
reserves only the information of eigenvalues. Although the information of the
wavefunction is lost in the thermodynamic description, the information of
eigenvalues remains. If we can extract the information of eigenvalues from the
thermodynamic quantity, we arrive at an approach solving only eigenvalues
without solving wavefunctions in the meantime. In this section, we show how to
calculate the eigenvalue of a system from the corresponding canonical
partition function in statistical mechanics.

For an eigenvalue spectrum $\left\{  E_{n}\right\}  $, the spectral counting
function $N\left(  E\right)  $ describes how many eigenstates whose
eigenvalues are smaller than $E$. In Refs.
\cite{dai2009number,dai2010approach}, a relation between the spectral counting
function $N\left(  E\right)  $ and the global heat kernel $K\left(  t\right)
$ is given: $N\left(  E\right)  =\frac{1}{2\pi i}\int_{c-i\infty}^{c+i\infty
}\frac{K\left(  t\right)  }{t}e^{tE}dt$. By the relation between the global
heat kernel $K\left(  t\right)  $ and the canonical partition function
$Z\left(  \beta\right)  $, we can calculate the counting function $N\left(
E\right)  $ from the canonical partition function $Z\left(  \beta\right)  $:%
\begin{equation}
N\left(  E\right)  =\frac{1}{2\pi i}\int_{c-i\infty}^{c+i\infty}\frac{Z\left(
\beta\right)  }{\beta}e^{\beta E}d\beta. \label{count_partition}%
\end{equation}

The $n$-th eigenvalue $E_{n}$ can be obtained from the counting function by
the equation \cite{courant2008methods}%
\begin{equation}
N\left(  E_{n}\right)  =n. \label{count}%
\end{equation}
Consequently, the eigenvalue $E_{n}$ can be solved by the following equation:%
\begin{equation}
\frac{1}{2\pi i}\int_{c-i\infty}^{c+i\infty}\frac{Z\left(  \beta\right)
}{\beta}e^{\beta E_{n}}d\beta=n.
\end{equation}

In the following, we solve eigenvalues for some many-body systems from the
corresponding canonical partition functions.

In should be emphasized that, in the relation between the counting function
and the heat kernel (canonical partition function), Eq. (\ref{count_partition}%
), there is a constant term $-\frac{1}{2}$ when $E=E_{n}$, (see Appendix
\ref{Appendix}). We will not take the contribution into account, because its
influence is often small enough to be ignored, especially for highly-excited
states, .

\section{Illustration of the method \label{illustration}}

In this section, we illustrate the method by some models whose eigenvalues are
already known, including a particle in a box, a harmonic oscillator, and
$N$\ bosonic harmonic oscillators.

\subsection{A particle in a box \label{a particle}}

A particle in a box in quantum mechanics corresponds to a classical ideal gas
confined in a box in statistical mechanics.

In the following, we illustrate the method by calculating the eigenvalue of a
particle in a box in virtue of the canonical partition function of
a\ classical ideal gas confined in a box.

In classical ideal gases, there are no classical inter-particle interactions
and quantum exchange interactions. The canonical partition function of a
classical ideal gas consisting of $N$ particles is \cite{reichl2016modern}%
\begin{equation}
Z\left(  \beta\right)  =z^{N}\left(  \beta\right)  , \label{1002}%
\end{equation}
where $z\left(  \beta\right)  $ is the single-particle partition function. The
single-particle partition function for free particles is
\cite{reichl2016modern}
\begin{equation}
z\left(  \beta\right)  =\frac{V}{\lambda^{D}}, \label{11002}%
\end{equation}
where $\lambda=h\sqrt{\frac{\beta}{2\pi m}}$ is the thermal wavelength with
$m$ the mass of the particle and $h$ the Planck constant, $V$ is the volume of
the container, and $D$ is the spatial dimension.\ The canonical partition
function of an $N$-particle classical ideal gas, by Eqs. (\ref{11002}) and
(\ref{1002}), is then
\begin{equation}
Z\left(  \beta\right)  =\left(  \frac{V}{\lambda^{D}}\right)  ^{N}.
\label{11102}%
\end{equation}

By the relation between the counting function $N\left(  \lambda\right)  $ and
the canonical partition function $Z\left(  \beta\right)  $, Eq.
(\ref{count_partition}), we can obtain the counting function,%
\begin{equation}
N\left(  E_{n}^{cl}\right)  =V^{N}\left(  \frac{2\pi m}{h^{2}}\right)
^{DN/2}\frac{1}{\Gamma\left(  1+DN/2\right)  }\left(  E_{n}^{cl}\right)
^{DN/2}, \label{1003}%
\end{equation}
where $\Gamma\left(  x\right)  $ is the Gamma function. The eigenvalue is
determined by the equation obtained by substituting Eq. (\ref{1003}) into Eq.
(\ref{count}):%
\begin{equation}
V^{N}\left(  \frac{2\pi m}{h^{2}}\right)  ^{DN/2}\frac{1}{\Gamma\left(
1+DN/2\right)  }\left(  E_{n}^{cl}\right)  ^{DN/2}=n.
\end{equation}
Solving the equation gives the eigenvalue
\begin{equation}
E_{n}^{cl}=\frac{h^{2}}{2\pi mV^{2/D}}\Gamma^{2/\left(  DN\right)  }\left(
1+\frac{DN}{2}\right)  n^{2/\left(  DN\right)  }. \label{1004}%
\end{equation}

Now let us see a familiar special case: the one-dimensional single-particle
case. In this case, $D=1$, $N=1$, and $V=L$. Eq. (\ref{1004}) then becomes%
\begin{equation}
\left.  E_{n}^{cl}\right\vert _{N=1}=\frac{h^{2}}{8m}\left(  \frac{n}%
{L}\right)  ^{2}. \label{11004}%
\end{equation}
This is just the eigenvalue of a particle in a one-dimensional periodic box
with a side length $L$. In a one-dimensional periodic box with a side length
$L$, the momentum of the particle is $p_{n}=\frac{h}{2L}n$, so the eigenvalue
(\ref{11004}) becomes%
\begin{equation}
\left.  E_{n}^{cl}\right\vert _{N=1}=\frac{p_{n}^{2}}{2m}.
\end{equation}

\subsection{One harmonic oscillator}

The harmonic oscillator in quantum mechanics corresponds to a classical ideal
harmonic oscillator gas in statistical mechanics.

In order to show the validity of the method and illustrate the method, we take
the harmonic oscillator as an example.

The eigenvalue of a harmonic oscillator is exactly known: $E_{n}=\hbar
\omega\left(  n+\frac{1}{2}\right)  $. The corresponding partition function
is
\begin{equation}
Z\left(  \beta\right)  =\sum_{n=1}^{\infty}e^{-\beta\hbar\omega\left(
n+\frac{1}{2}\right)  }=\left(  e^{\frac{1}{2}\beta\hbar\omega}-e^{-\frac
{1}{2}\beta\hbar\omega}\right)  ^{-1}. \label{A1}%
\end{equation}

Now we show how to obtain the eigenvalue from the partition function by the method.

By the relation between the counting function $N\left(  E_{n}\right)  $ and
the canonical partition function $Z\left(  \beta\right)  $, Eq.
(\ref{count_partition}), we can obtain the counting function.

First expand partition function (\ref{A1}) in power series of $e^{-\beta
\hbar\omega}$,%
\begin{equation}
Z\left(  \beta\right)  =\sum_{k=1}^{\infty}e^{-\frac{1}{2}\beta\hbar\omega
}e^{-\beta\hbar\omega k}. \label{AZho}%
\end{equation}

Substituting Eq. (\ref{AZho}) into Eq. (\ref{count_partition}) gives the
counting function:%
\begin{equation}
N\left(  E\right)  =\sum_{k=1}^{\infty}\theta\left(  E-\left(  \frac{1}%
{2}+k\right)  \hbar\omega\right)  , \label{A3a}%
\end{equation}
where $\theta\left(  x\right)  $\ is the Heaviside theta function.\textbf{\ }%
The eigenvalue is determined by $N\left(  E_{n}\right)  =n$, which directly
gives the eigenvalue of a harmonic oscillator:%
\begin{equation}
E_{n}=\hbar\omega\left(  n+\frac{1}{2}\right)  .
\end{equation}

\subsection{$N$\ bosonic harmonic oscillators}

An $N$\ bosonic harmonic oscillator system in quantum mechanics corresponds to
a bosonic harmonic oscillator gas in statistical mechanics.

In this example, we show the validity of the method.

\subsubsection{Calculating eigenvalues from partition functions}

The exact canonical partition function of a system consists of $N$ bosonic
harmonic oscillators is given by \cite{zhou2018canonical}%
\begin{equation}
Z\left(  \beta,N\right)  =\frac{1}{N!}\det\left(
\begin{array}
[c]{ccccc}%
Z\left(  \beta\right)  & -1 & 0 & ... & 0\\
Z\left(  2\beta\right)  & Z\left(  \beta\right)  & -2 & ... & 0\\
Z\left(  3\beta\right)  & Z\left(  2\beta\right)  & Z\left(  \beta\right)  &
... & ...\\
... & ... & ... & ... & -\left(  N-1\right) \\
Z\left(  N\beta\right)  & Z\left(  N\beta-\beta\right)  & Z\left(
N\beta-2\beta\right)  & ... & Z\left(  \beta\right)
\end{array}
\right)  . \label{A6}%
\end{equation}
where $Z\left(  \beta\right)  $ is the single particle partition function
given by Eq. (\ref{A1}).

As examples, consider two cases: $N=2$ and $N=3$.

\paragraph{Exact eigenvalues}

For $N=2$, the partition function given by Eq. (\ref{A6}) reads%
\begin{equation}
Z\left(  \beta,2\right)  =\frac{e^{2\hbar\omega\beta}}{\left(  e^{\hbar
\omega\beta}-1\right)  ^{2}\left(  e^{\hbar\omega\beta}+1\right)  }.
\label{A7}%
\end{equation}
Expanding $Z\left(  \beta,2\right)  $ given by Eq. (\ref{A7}) as a power
series of $e^{-\beta\hbar\omega}$ gives
\begin{equation}
Z\left(  \beta,2\right)  =\sum_{k=1}\frac{1}{4}\left[  2k+1-\left(  -1\right)
^{k}\right]  \left(  e^{-\hbar\omega\beta}\right)  ^{k}. \label{AA1}%
\end{equation}
The counting function can be obtained by substituting Eq. (\ref{AA1}) into Eq.
(\ref{A2b}):%
\begin{equation}
N\left(  E\right)  =\sum_{k=1}\frac{1}{4}\left[  2k+1-\left(  -1\right)
^{k}\right]  \theta\left(  E-k\hbar\omega\right)  , \label{AA3}%
\end{equation}

The eigenvalue can be obtained by solving Eq. (\ref{count}). It can be
directly shown that the exact solution of Eq. (\ref{count}) with the counting
function (\ref{AA3}) is%
\begin{equation}
E^{N=2}=\left(  n+1\right)  \hbar\omega\label{AA4}%
\end{equation}
with the degeneracy%
\begin{equation}
\omega\left(  E^{N=2}\right)  =\frac{1}{4}\left[  3+\left(  -1\right)
^{n}+2n\right]  . \label{AA5}%
\end{equation}
This agrees with the exact result given in Ref.
\cite{zhou2018statistical}.

For $N=3$, the partition function given by Eq. (\ref{A6}) reads%
\begin{equation}
Z\left(  \beta,3\right)  =\frac{e^{9\hbar\omega\beta/2}}{\left(
e^{\hbar\omega\beta}-1\right)  ^{3}\left(  1+e^{\hbar\omega\beta}\right)
\left(  1+e^{\hbar\omega\beta}+e^{2\hbar\omega\beta}\right)  }. \label{AA6}%
\end{equation}
Expanding $Z\left(  \beta,3\right)  $ given by Eq. (\ref{AA6}) as a
power series of $e^{-\beta\hbar\omega}$ gives\textbf{\ }%
\begin{align}
Z\left(  \beta,3\right)   &  =\sum_{n=\frac{3}{2},\frac{5}{2},\ldots}\frac
{1}{72}\left[  47+6\left(  k-\frac{3}{2}\right)  \left(  k+\frac{9}{2}\right)
-9\left(  -1\right)  ^{k-1/2}\right. \nonumber\\
&  \left.  +16\cos\left(  \frac{2\pi}{3}k-\pi\right)  \right]  \left(
e^{-\beta\hbar\omega}\right)  ^{k}. \label{AA7}%
\end{align}

The counting function can be obtained by substituting Eq. (\ref{AA7})
into Eq. (\ref{A2b}):%
\begin{align}
N\left(  E\right)   &  =\sum_{n=\frac{3}{2},\frac{5}{2},\ldots}\frac{1}%
{72}\left[  47+6\left(  k-\frac{3}{2}\right)  \left(  k+\frac{9}{2}\right)
-9\left(  -1\right)  ^{k-1/2}\right. \nonumber\\
&  \left.  +16\cos\left(  \frac{2\pi}{3}k-\pi\right)  \right]  \theta\left(
E-\hbar\omega k\right)  . \label{AA8}%
\end{align}
The eigenvalue can be obtained by solving Eq. (\ref{count}). It can be
directly shown that the exact solution of Eq. (\ref{count}) with the counting
function (\ref{AA8}) is%
\begin{equation}
E^{N=3}=\left(  n+\frac{3}{2}\right)  \hbar\omega\label{AA9}%
\end{equation}
with the degeneracy%
\begin{equation}
\omega\left(  E^{N=3}\right)  =\frac{1}{72}\left[  47+6n\left(  6+n\right)
+9\left(  -1\right)  ^{n}+16\cos\left(  \frac{2\pi}{3}n\right)  \right]  .
\label{AA10}%
\end{equation}
This agrees with the exact result given in Ref.
\cite{zhou2018statistical}.

\paragraph{Approximately smoothed eigenvalues}

Often, exactly solving the discrete eigenvalue from the equation (\ref{count})
is difficult. Instead, we can turn to seek an approximately smoothed
eigenvalues, which suffers a loss of the information of the degeneracy.

Take also the $N$\ bosonic harmonic oscillator system as an example.

For $N=2$\textbf{, }expanding the canonical partition function (\ref{AA6}) as
a power series of $\beta$ and then substituting into the counting function
(\ref{A2b}) to obtain the counting function give\
\begin{align}
N\left(  E\right)   &  =\frac{1}{2}\sum_{k\geq-2}\frac{-\left(  1+k\right)
\left(  \hbar\omega\right)  ^{k}B_{2+k}}{\left(  2+k\right)  !}\frac{1}{2\pi
i}\int_{c-i\infty}^{c+i\infty}\beta^{k-1}e^{\beta E}d\beta\nonumber\\
&  +\frac{1}{2}\sum_{k\geq-1}\frac{2^{k}\left(  \hbar\omega\right)
^{k}B_{1+k}\left(  \frac{1}{2}\right)  }{\left(  1+k\right)  !}\frac{1}{2\pi
i}\int_{c-i\infty}^{c+i\infty}\beta^{k-1}e^{\beta E}d\beta, \label{A8a}%
\end{align}
where $B_{k}$\ and $B_{k}\left(  x\right)  $\ are the Bernoulli numbers.
Using
\begin{equation}
\frac{1}{2\pi i}\int_{c-i\infty}^{c+i\infty}\beta^{k}e^{\beta E}%
d\beta=\left\{
\begin{array}
[c]{c}%
\frac{1}{\left(  -k-1\right)  !}E^{-k-1},\text{ \ \ }k<0,\\
\delta^{\left(  k\right)  }\left(  E\right)  ,\ \text{\ \ \ \ \ \ }k\geq0
\end{array}
\right.  . \label{AA3a}%
\end{equation}
we arrive at%
\begin{equation}
N\left(  E\right)  =-\frac{1}{24}+\frac{E}{4}\frac{1}{\hbar\omega}+\frac
{E^{2}}{4\hbar^{2}\omega^{2}}, \label{A1010}%
\end{equation}
where $B_{0}=1$, $B_{1}=-\frac{1}{2}$, $B_{2}=\frac{1}{6}$, $B_{0}\left(
\frac{1}{2}\right)  =1$, and $B_{1}\left(  \frac{1}{2}\right)  =0$\ are used.
Solving Eq. (\ref{count}) with\ Eq. (\ref{A1010}) gives the smoothed
eigenvalue,%
\begin{equation}
E_{n,}^{N=2}=\frac{1}{2}\left(  \sqrt{\frac{48n+5}{3}}-1\right)  \hbar\omega.
\label{A111}%
\end{equation}

Similarly, for $N=3$, the counting function is
\begin{equation}
N\left(  E\right)  =\frac{-5}{96}+\frac{13E}{144}\frac{1}{\hbar\omega}%
+\frac{E^{2}}{8\hbar^{2}\omega^{2}}+\frac{E^{3}}{36\hbar^{3}\omega^{3}}.
\label{A1111}%
\end{equation}
Solving Eq. (\ref{count}) with\ Eq. (\ref{A1111}) gives the smoothed
eigenvalue,\textbf{\ }%
\begin{align}
E_{n}^{N=3}  &  =\left\{  \frac{7}{6^{1/3}\left[  648n+\sqrt{6\left(
69984n^{2}-343\right)  }\right]  ^{1/3}}\right. \nonumber\\
&  \left.  +\frac{\left[  648n+\sqrt{6\left(  69984n^{2}-343\right)  }\right]
^{1/3}}{6^{2/3}}-\frac{3}{2}\right\}  \hbar\omega. \label{A222}%
\end{align}

The exact eigenvalues, Eqs. (\ref{AA4}), (\ref{AA5}), (\ref{AA9}), and
(\ref{AA10}), and the smoothed eigenvalue, Eq. (\ref{A1111}) and (\ref{A222})
are compared in Figures (\ref{nonka_0}) and (\ref{nonka_1}).

\ \begin{figure}[ptb]
\centering
\includegraphics[width=0.8\textwidth]{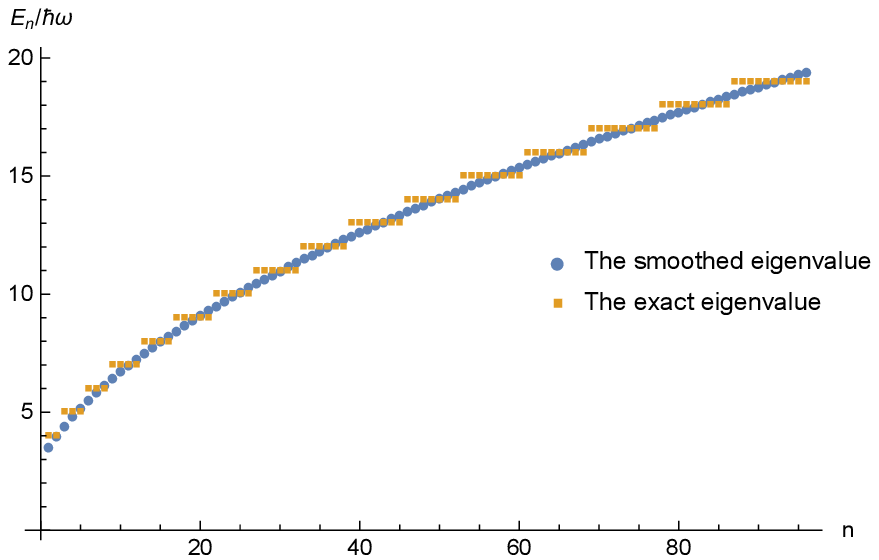}\newline\caption{Comparison of the
smoothed eigenvalue and the exact eigenvalue of a two bosonic oscillator
system. }%
\label{nonka_0}%
\end{figure}

\ \begin{figure}[ptb]
\centering
\includegraphics[width=0.8\textwidth]{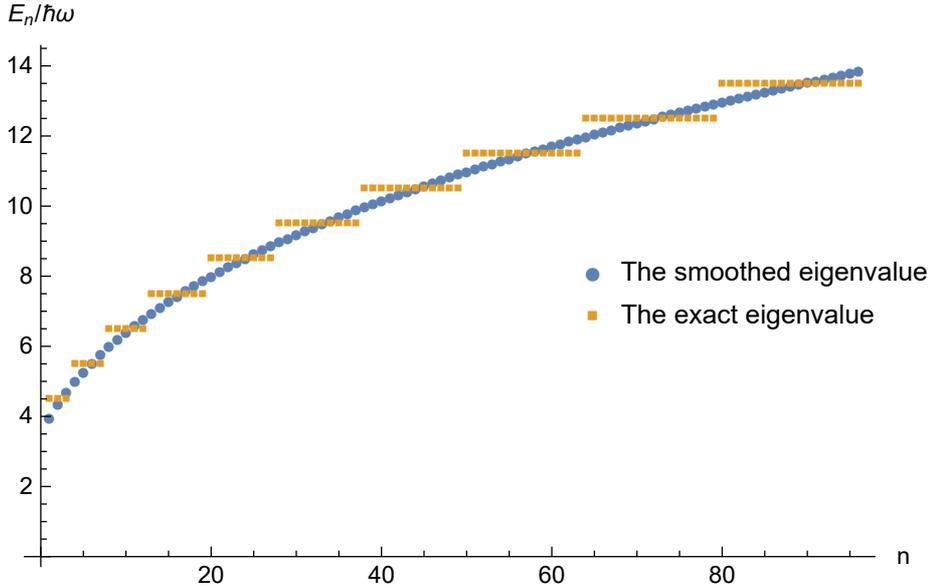}\newline\caption{Comparison of the
smoothed eigenvalue and the exact eigenvalue of a three bosonic oscillator
system. }%
\label{nonka_1}%
\end{figure}

\section{Noninteracting identical particles in a box: exchange energies
\label{identicalbox}}

A\ quantum many-body system consisting of noninteracting identical particles
corresponds to an ideal quantum gas.

There exist exchange interactions among identical particles. The influence of
the quantum exchange interaction will be reflected in the eigenvalue,
appearing as the exchange energy.

The method provides an approach to calculate the exchange energy with the help
of statistical mechanics. In statistical mechanics, the contribution of
exchange energies is taken into account by simply setting the maximum
occupation number. That is, the information of exchange interactions\ of
identical particles is embodied in the partition function of the system.

The canonical partition function of an $N$-particle quantum ideal gas is
\cite{zhou2018canonical}%
\begin{equation}
Z\left(  \beta\right)  \simeq\frac{1}{N!}z\left(  \beta\right)  ^{N}\pm
\frac{1}{2\left(  N-2\right)  !}z\left(  \beta\right)  ^{N-2}z\left(
2\beta\right)  +\ldots, \label{1008}%
\end{equation}
where "$+$" stands for Bose gases, "$-$" stands for Fermi gases, and $z\left(
\beta\right)  $ is the single-particle partition function given by Eq.
(\ref{11002}). Substituting Eq. (\ref{11002}) into Eq. (\ref{1008}) gives the
canonical partition function of a quantum ideal gas,
\begin{equation}
Z\left(  \beta\right)  \simeq\frac{1}{N!}\left(  \frac{V}{\lambda^{D}}\right)
^{N}\pm\frac{1}{2^{1+D/2}\left(  N-2\right)  !}\left(  \frac{V}{\lambda^{D}%
}\right)  ^{N-1}+\cdots. \label{1009}%
\end{equation}
The counting function is given by substituting Eq. (\ref{1009}) into Eq.
(\ref{count_partition}):%
\begin{align}
N\left(  E\right)   &  \simeq\frac{1}{N!\Gamma\left(  1+DN/2\right)  }\left(
\frac{V}{\Lambda^{D}}\right)  ^{N}E_{n}^{DN/2}\nonumber\\
&  \pm\frac{1}{2^{1+D/2}\left(  N-2\right)  !\Gamma\left(  1+D\left(
N-1\right)  /2\right)  }\left(  \frac{V}{\Lambda^{D}}\right)  ^{N-1}%
E_{n}^{DN/2-D/2}+\cdots, \label{1010}%
\end{align}
where $\Lambda=\frac{h}{\sqrt{2\pi m}}$. Then solving Eq. (\ref{count}) with
the counting function (\ref{1010}) gives\textbf{ }the eigenvalue,
\begin{align}
E_{n}  &  \simeq E_{n}^{cl}\left(  N!\right)  ^{2/\left(  DN\right)  }\left[
1+\right. \nonumber\\
&  \left.  \frac{2\left(  1-N\right)  }{D\left(  1-N\right)  ^{2}\pm
2^{1+D/2}D\left(  N!\right)  ^{1/N}\Gamma^{1/N-1}\left(  1+DN/2\right)
\Gamma\left(  1+D\left(  N-1\right)  /2\right)  n^{1/N}}\right]  ,
\label{11010}%
\end{align}
where $E_{n}^{cl}$ is the eigenvalue of a system consisting of $N$
noninteracting particles given by Eq. (\ref{1004}). We make the assumption
that the eigenvalue can be written in the form $E_{n}\sim\left(  N!\right)
^{2/\left(  DN\right)  }E_{n}^{cl}\left(  1+\text{corrections}\right)  $ with
the corrections small enough. At higher excited states, i.e., the case of
large $n$, this assumption is valid.

The influence of the quantum exchange interaction appears as the exchange
energy in eigenvalues. From the result, Eq. (\ref{11010}), we can see the
contribution of the exchange energy of bosons or fermions on the eigenvalue,
partly reflected in the terms with the sign "$\pm$".

In order to illustrate the quantum exchange interaction, we consider a
$D$-dimensional system consisting of two bosons or two fermions. By Eq.
(\ref{11010}), the eigenvalue of such a two-particle system reads%
\begin{equation}
\left.  E_{n}\right\vert _{N=2}\simeq\left.  E_{n}^{cl}\right\vert
_{N=2}\left(  2!\right)  ^{1/D}\left[  1-\frac{2\sqrt{D!}}{D\sqrt{D!}%
\pm2^{1+\left(  D+1\right)  /2}D\Gamma\left(  1+D/2\right)  n^{1/2}}\right]  ,
\label{11011}%
\end{equation}
where%
\begin{equation}
\left.  E_{n}^{cl}\right\vert _{N=2}=\frac{h^{2}}{2\pi mV^{2/D}}\Gamma
^{1/D}\left(  D+1\right)  n^{1/D}%
\end{equation}
is the eigenvalue of classical particles.

For one-dimensional cases, the eigenvalue%
\begin{equation}
\left.  E_{n}\right\vert _{N=2}^{D=1}\simeq\frac{h^{2}}{\pi mL^{2}}\left(
1-\frac{1}{1/2\pm\sqrt{\pi}n^{1/2}}\right)  n, \label{11011B}%
\end{equation}
for two-dimensional cases, the eigenvalue%
\begin{equation}
\left.  E_{n}\right\vert _{N=2}^{D=2}\simeq\frac{h^{2}}{\pi m\Omega}\left(
1-\frac{1}{1\pm4n^{1/2}}\right)  n^{1/2},
\end{equation}
and for three-dimensional cases, the eigenvalue%
\begin{equation}
\left.  E_{n}\right\vert _{N=2}^{D=3}\simeq\left(  \frac{3}{2}\right)
^{1/3}\frac{h^{2}}{\pi mV^{2/3}}\left(  1-\frac{2/3}{1\pm\sqrt{6}\sqrt{\pi
}n^{1/2}}\right)  n^{1/3},
\end{equation}
where $L$ is the length, $\Omega$ is the area, and $V$ is the volume of the container.

From Eq. (\ref{11011B}) one can see that the quantum exchange interaction
between bosons is attract and the quantum exchange interaction between
fermions is repulsive.

\section{Noninteracting identical particles in external fields
\label{identicalexternal}}

In this section, we calculate the eigenvalue of identical particles in an
external field with the help of statistical mechanics.

For an ideal gas in an external field, the single-particle eigenvalue is
determined by the Hamiltonian $H=-\frac{\hbar^{2}}{2m}\nabla^{2}+U\left(
x\right)  $. The partition function for a classical gas can be expressed as
\cite{dai2009number,bordag1996heat,vassilevich2003heat}
\begin{equation}
z\left(  \beta\right)  =\frac{V}{\lambda^{D}}\left(  1+\sum_{k=1/2,1,\ldots
}B_{k}\beta^{k}\right)  , \label{11005}%
\end{equation}
where $B_{k}$ is the heat kernel coefficient, e.g., $B_{1/2}=\frac{1}{V}\int
dxU\left(  x\right)  $. Eq. (\ref{11005}) is known as the heat kernel
expansion \cite{vassilevich2003heat}.

For quantum ideal gases in an external field, there are also quantum exchange
interactions. In order to illustrate the influence of the quantum exchange
interaction, we consider a $D$-dimensional system consisting of two bosons or
two fermions. For the two-particle case, the exact canonical partition
function is \cite{zhou2018canonical}%
\begin{equation}
Z\left(  \beta\right)  =\frac{1}{2}z^{2}\left(  \beta\right)  \pm\frac{1}%
{2}z\left(  2\beta\right)  . \label{BF2}%
\end{equation}
Substituting Eq. (\ref{11005}) into Eq. (\ref{BF2}) gives the canonical
partition function:\textbf{ }%
\begin{equation}
Z\left(  \beta\right)  \simeq\frac{1}{2}\left(  \frac{V}{\lambda^{D}}\right)
^{2}+\left(  \frac{V}{\lambda^{D}}\right)  ^{2}\sqrt{\beta}B_{1/2}\pm\frac
{1}{2^{1+D/2}}\frac{V}{\lambda^{D}}+\ldots, \label{1012}%
\end{equation}
where the second term is the contribution of the external field, the third
term is the contribution of the quantum exchange interaction.

The counting function can be obtained by substituting Eq. (\ref{1012}) into
Eq. (\ref{count_partition}):
\begin{align}
N\left(  E\right)   &  \simeq\frac{1}{2\Gamma\left(  1+D\right)  }\left(
\frac{V}{\Lambda^{D}}\right)  ^{2}E^{D}+\frac{1}{\Gamma\left(  1/2+D\right)
}\left(  \frac{V}{\Lambda^{D}}\right)  ^{2}B_{1/2}E^{D-1/2}\nonumber\\
&  \pm\frac{1}{2^{1+D/2}\Gamma\left(  1+D/2\right)  }\frac{V}{\Lambda^{D}%
}E^{D/2}+\ldots. \label{1013}%
\end{align}
Solving Eq (\ref{count}) with the counting function Eq. (\ref{1013}) gives%

\begin{equation}
\left.  E_{n}\right\vert _{N=2}\simeq E_{n}^{cl}\left(  N!\right)  ^{2/\left(
ND\right)  }\left[  1-\frac{\pm2c_{1}n^{1/\left(  2D\right)  -1/2}+2\left(
B_{1/2}\Lambda V^{-1/D}\right)  }{c_{2}n^{1/\left(  2D\right)  }\pm
Dc_{1}n^{1/\left(  2D\right)  -1/2}+\left(  2D-1\right)  \left(
B_{1/2}\Lambda V^{-1/D}\right)  }\right]  , \label{10144}%
\end{equation}
where $c_{1}=2^{-3/2+1/\left(  2D\right)  -D/2}\left(  D!\right)
^{-1/2+1/\left(  2D\right)  }\frac{\Gamma\left(  1/2+D\right)  }{\Gamma\left(
1+D/2\right)  }$ and $c_{2}=2^{1/\left(  2D\right)  }D\left(  D!\right)
^{-1+1/\left(  2D\right)  }\Gamma\left(  1/2+D\right)  $.

From Eq. (\ref{10144}), one can see the influence of the external field,
$B_{1/2}\Lambda V^{-1/D}$, and the influence of the quantum exchange
interaction, partly reflected in the terms with the sign "$\pm$", on the
eigenvalue of a quantum ideal gas.

For one-dimensional cases, the eigenvalue%
\begin{equation}
\left.  E_{n}\right\vert _{N=2}^{D=1}\simeq\frac{h^{2}}{\pi mL^{2}}\left[
1-\frac{8\left(  B_{1/2}\Lambda V^{-1/D}\right)  \pm2\sqrt{2}}{2\sqrt{2\pi
}n^{1/2}+4\left(  B_{1/2}\Lambda V^{-1/D}\right)  \pm\sqrt{2}}\right]  n,
\label{11012B}%
\end{equation}
for two-dimensional cases, the eigenvalue%
\begin{equation}
\left.  E_{n}\right\vert _{N=2}^{D=2}\simeq\frac{h^{2}}{\pi m\Omega}\left[
1-\frac{32\left(  B_{1/2}\Lambda V^{-1/D}\right)  n^{1/4}\pm3\sqrt{2\pi}%
}{12\sqrt{2\pi}n^{1/2}+48\left(  B_{1/2}\Lambda V^{-1/D}\right)  n^{1/4}%
\pm3\sqrt{2\pi}}\right]  n^{1/2},
\end{equation}
and\textbf{ }for three-dimensional cases, the eigenvalue%
\begin{align}
\left.  E_{n}\right\vert _{N=2}^{D=3}  &  \simeq\left(  \frac{3}{2}\right)
^{1/3}\frac{h^{2}}{\pi mV^{2/3}}\left[  1\right. \nonumber\\
&  \left.  -\frac{192\left(  B_{1/2}\Lambda V^{-1/D}\right)  n^{1/3}%
\pm10\times2^{5/6}\times3^{2/3}}{90\times2^{1/3}\times3^{1/6}\sqrt{\pi}%
n^{1/2}+480\left(  B_{1/2}\Lambda V^{-1/D}\right)  n^{1/3}\pm15\times
2^{5/6}\times3^{2/3}}\right]  n^{1/3}.
\end{align}

\section{Influences of topologies on eigenvalues \label{topology}}

In this section, we discuss the topology effect on the eigenvalue\textbf{ }of
classical and quantum particles in nontrivial topological containers.
Classical and quantum particles in quantum mechanics corresponds to classical
and quantum gases in statistical mechanics. In statistical mechanics, the
geometric effect and the topology effect are systematically studied
\cite{dai2003quantum,aydin2016discrete,firat2013quantum,dai2007interacting,aydin2014discrete,aydin2015dimensional}%
. In the following, we calculate the eigenvalue from the result given by
statistical mechanics.

\subsection{Non-interacting classical particles in a two-dimensional
nontrivial topological box}

The single-particle partition function of a two-dimensional ideal classical
gas in a nontrivial topological box is indeed the global heat kernel given by
Kac in his famous paper "\textit{Can one hear the shape of a drum?"
}\cite{kac1966can}. Kac's result allows us to discuss the influence of
topology of space on the eigenvalue.

The single-particle partition function of a two-dimensional confined ideal
classical gas is just the global heat kernel given by Kac \cite{kac1966can}%
\begin{equation}
z\left(  \beta\right)  =\frac{\Omega}{\lambda^{2}}-\frac{1}{4}\frac{L}%
{\lambda}+\frac{\chi}{6}, \label{110001}%
\end{equation}
where $\Omega$ is the area and $L$ the perimeter of the two-dimensional
container. $\chi=1-r$ here is the Euler-Poincar\'{e} characteristic number
with $r$ the number of holes in the two-dimensional container, which describes
the connectivity, a topological property of the system.

First consider the eigenvalue of a particle in a nontrivial topological box.
From Eq. (\ref{count_partition}), we can obtain the counting function,
\begin{equation}
N\left(  E\right)  =\frac{\Omega}{\Lambda^{2}}E-\frac{1}{2\sqrt{\pi}}\frac
{L}{\Lambda}E^{1/2}+\frac{\chi}{6}. \label{single}%
\end{equation}
Solving Eq. (\ref{count}) with\ the counting function Eq. (\ref{single}) gives
the eigenvalue%
\begin{align}
E_{n}  &  =\frac{h^{2}}{2\pi m\Omega}\left[  n+\frac{1}{2\pi}\left(  \frac
{L}{\sqrt{\Omega}}\right)  \sqrt{n\pi+\frac{1}{16}\left(  \frac{L}%
{\sqrt{\Omega}}\right)  ^{2}-\frac{\pi}{6}\chi}\right. \nonumber\\
&  \left.  +\frac{1}{8\pi}\left(  \frac{L}{\sqrt{\Omega}}\right)  ^{2}%
-\frac{1}{6}\chi\right]  . \label{lambdanKaccl}%
\end{align}
From the expression of the eigenvalue, Eq. (\ref{lambdanKaccl}), we can see
the geometric effect, reflected in the terms with the factor $\frac{L}%
{\sqrt{\Omega}}$, and the topological effect, reflected in the terms with the
factor $\chi$, explicitly.

\subsection{Non-interacting quantum particles in a two-dimensional nontrivial
topological box}

For two-particle ideal quantum systems, the canonical partition function can
be obtained by substituting Eq. (\ref{110001}) into Eq. (\ref{BF2}):%
\begin{equation}
Z\left(  \beta\right)  \simeq\frac{1}{2}\left(  \frac{\Omega}{\lambda^{2}%
}\right)  ^{2}-\frac{1}{4}\frac{L}{\lambda}\frac{\Omega}{\lambda^{2}}\pm
\frac{1}{2}\frac{\Omega}{\lambda^{2}}+\frac{1}{32}\left(  \frac{L}{\lambda
}\right)  ^{2}+\frac{\chi}{6}\frac{\Omega}{\lambda^{2}}+\ldots.
\label{1100044}%
\end{equation}

The counting function is given by substituting Eq. (\ref{1100044}) into Eq.
(\ref{count_partition}):
\begin{align}
N\left(  E\right)   &  \simeq\frac{1}{4}\left(  \frac{\Omega}{\Lambda^{2}%
}\right)  ^{2}E^{2}-\frac{1}{3\sqrt{\pi}}\frac{L}{\Lambda}\frac{\Omega
}{\Lambda^{2}}E^{3/2}\pm\frac{1}{2}\frac{\Omega}{\Lambda^{2}}E\nonumber\\
&  +\frac{1}{32}\left(  \frac{L}{\Lambda}\right)  ^{2}E+\frac{\chi}{6}%
\frac{\Omega}{\Lambda^{2}}E+\ldots. \label{110005}%
\end{align}
Solving Eq. (\ref{count}) with the counting function Eq. (\ref{110005}) gives
the eigenvalue
\begin{equation}
\left.  E_{n}\right\vert _{N=2}\simeq\frac{h^{2}}{\pi m\Omega}\left[
1-\frac{-\frac{2\sqrt{2}}{3}\frac{L}{\sqrt{\Omega}}n^{3/4}+\sqrt{\pi}\left(
\pm1+\frac{1}{3}\chi+\frac{1}{16}\frac{L^{2}}{\Omega}\right)  n^{1/2}}%
{2\sqrt{\pi}n-\sqrt{2}\frac{L}{\sqrt{\Omega}}n^{3/4}+\sqrt{\pi}\left(
\pm1+\frac{1}{3}\chi+\frac{1}{16}\frac{L^{2}}{\Omega}\right)  n^{1/2}}\right]
n^{1/2}. \label{lambdanKacclQ}%
\end{equation}

From Eq. (\ref{lambdanKacclQ}), one can see that the eigenvalue of a
two-dimensional ideal quantum gas in a nontrivial topological box is modified
by the geometric effect described by $\frac{L}{\sqrt{\Omega}}$, and by the
topological effect described by $\chi$. Moreover, for such quantum cases,
there exist exchange energies partly reflected in the terms with the sign
"$\pm$".

\section{Interacting classical many-body systems with the Lennard-Jones
interaction \label{Lennard-Jones}}

A system consisting of particles interacted with each other through the
Lennard-Jones potential\ in quantum mechanics corresponds to an interacting
gas with the Lennard-Jones interaction in statistical mechanics.

The Lennard-Jones inter-particle potential reads
\begin{equation}
U=4\varepsilon\left[  \left(  \frac{\sigma}{r}\right)  ^{12}-\left(
\frac{\sigma}{r}\right)  ^{6}\right]  , \label{LJ}%
\end{equation}
where $\varepsilon$ is the depth of the potential well and $\sigma$ is the
finite distance at which the inter-particle potential is zero
\cite{pathria2011statistical}. The canonical partition function of a classical
interacting gas with $N$ particles is given by \cite{zhou2018canonical}
\begin{equation}
Z\left(  \beta\right)  \simeq\frac{1}{N!}\left(  \frac{V}{\lambda^{3}}\right)
^{N}+\frac{1}{\left(  N-2\right)  !}\left(  \frac{V}{\lambda^{3}}\right)
^{N-1}b_{2}+\ldots, \label{1015}%
\end{equation}
where $b_{2}=\frac{2\pi}{\lambda^{3}}\int d^{3}r\left(  e^{-\beta U}-1\right)
$.

The coefficient $b_{2}$ for the Lennard-Jones interaction is
\cite{pathria2011statistical}
\begin{equation}
b_{2}\simeq\frac{2\pi}{3}\frac{r_{0}^{3}}{\lambda^{3}}\left(  u_{0}%
\beta-1\right)  , \label{111016}%
\end{equation}
where $r_{0}=2^{1/6}\sigma$. Then the canonical partition reads\textbf{ }%
\begin{equation}
Z\left(  \beta\right)  \simeq\frac{1}{N!}\left(  \frac{V}{\lambda^{3}}\right)
^{N}-\frac{2\pi}{3\left(  N-2\right)  !}\left(  \frac{V}{\lambda^{3}}\right)
^{N}\frac{r_{0}^{3}}{V}+\frac{2\pi}{3\left(  N-2\right)  !}\left(  \frac
{V}{\lambda^{3}}\right)  ^{N}\frac{r_{0}^{3}}{V}u_{0}\beta+\ldots.
\label{1017}%
\end{equation}
The counting function can be obtained by substituting Eq. (\ref{1017}) into
Eq. (\ref{count_partition}),%
\begin{align}
N\left(  E\right)   &  \simeq\frac{1}{N!\Gamma\left(  1+3N/2\right)  }\left(
\frac{V}{\Lambda^{3}}\right)  ^{N}E^{3N/2}-\frac{2\pi}{3\left(  N-2\right)
!\Gamma\left(  1+3N/2\right)  }\frac{r_{0}^{3}}{V}\left(  \frac{V}{\Lambda
^{3}}\right)  ^{N}E^{3N/2}\nonumber\\
&  +\frac{2\pi}{3\left(  N-2\right)  !\Gamma\left(  3N/2\right)  }\frac
{r_{0}^{3}}{V}u_{0}\left(  \frac{V}{\Lambda^{3}}\right)  ^{N}E^{3N/2-1}%
+\ldots. \label{1018}%
\end{align}
The eigenvalue can be obtained by solving Eq. (\ref{count}) with the counting
function (\ref{1018}):
\begin{align}
E_{n}  &  \simeq E_{n}^{cl}\left(  N!\right)  ^{2/\left(  3N\right)  }\left[
1\right. \nonumber\\
&  \left.  +\frac{4\left(  N!\right)  ^{2/\left(  3N\right)  }\Gamma
^{2/\left(  3N\right)  }\left(  1+3N/2\right)  r_{0}^{3}n^{2/\left(
3N\right)  }-6Nr_{0}^{3}V^{2/3}u_{0}\Lambda^{-2}}{3\left(  N!\right)
^{2/\left(  3N\right)  }\Gamma^{2/\left(  3N\right)  }\left(  1+3N/2\right)
\left[  \frac{3}{\left(  N-1\right)  \pi}V-2Nr_{0}^{3}\right]  n^{2/\left(
3N\right)  }+3N\left(  3N-2\right)  r_{0}^{3}V^{2/3}u_{0}\Lambda^{-2}}\right]
. \label{1019}%
\end{align}

Especially, for $N=1$, there is of course no inter-particle interactions, so
the eigenvalue (\ref{1019}) recovers the eigenvalue of a free particle%

\begin{equation}
\left.  E_{n}\right\vert _{N=1}\simeq\frac{3^{2/3}}{2^{7/3}}\frac{h^{2}}%
{m\pi^{2/3}V^{2/3}}n^{2/3}.
\end{equation}
For $N=2$, the eigenvalue (\ref{1019}) becomes%

\begin{equation}
\left.  E_{n}\right\vert _{N=2}\simeq\left(  \frac{3}{2}\right)  ^{1/3}%
\frac{h^{2}}{\pi mV^{2/3}}\left[  1+\frac{12^{1/3}\times\frac{4}{3}\pi
r_{0}^{3}n^{1/3}-4\pi r_{0}^{3}V^{2/3}u_{0}\Lambda^{-2}}{3\times
12^{1/3}\left(  V-\frac{4}{3}\pi r_{0}^{3}\right)  n^{1/3}+8\pi r_{0}%
^{3}V^{2/3}u_{0}\Lambda^{-2}}\right]  n^{1/3}.
\end{equation}

\section{Interacting quantum many-body systems with hard-sphere interactions
\label{hard-sphere}}

A system consisting of particles interacted with each other through the
hard-sphere potential\ in quantum mechanics corresponds to an interacting gas
with the hard-sphere interaction in statistical mechanics.

In a quantum hard-sphere gas, there exist both the quantum exchange
interaction and the classical hard-sphere interaction. The canonical partition
function of a quantum interacting gas is \cite{zhou2018canonical}
\begin{equation}
Z\left(  \beta\right)  \simeq\frac{1}{N!}\left(  \frac{V}{\lambda^{3}}\right)
^{N}+\frac{1}{\left(  N-2\right)  !}\left(  \frac{V}{\lambda^{3}}\right)
^{N-1}b_{2}+\ldots, \label{11016}%
\end{equation}
where
\begin{equation}
b_{2}=\frac{1}{2V\lambda^{3}}\int d^{6}qU_{2}\left(  q_{1},q_{2}\right)
\label{11017}%
\end{equation}
with $U_{2}\left(  q_{1},q_{2}\right)  =\lambda^{6}\left[  \langle q_{1}%
,q_{2}|e^{-\beta H_{2}}|q_{1},q_{2}\rangle-\langle q_{1}|e^{-\beta H_{1}%
}|q_{1}\rangle\langle q_{2}|e^{-\beta H_{1}}|q_{2}\rangle\right]  $,
$H_{1}=-\frac{\hbar^{2}}{2m}\nabla^{2}$, $H_{2}=-\frac{\hbar^{2}}{2m}%
\nabla_{1}^{2}-\frac{\hbar^{2}}{2m}\nabla_{2}^{2}+4\pi a\hbar^{2}%
/m\delta\left(  q_{1}-q_{2}\right)  $, and $a$ is the radius of the particle.

For Bose gases, Eq. (\ref{11017}) becomes $b_{2}=2^{-5/2}-2\frac{a}{\lambda
}-\frac{10\pi^{2}}{3}\frac{a^{5}}{\lambda^{5}}$ \cite{pathria2011statistical};
for Fermi gases, Eq. (\ref{11017}) becomes $b_{2}=-2^{-5/2}-6\pi\frac{a^{3}%
}{\lambda^{3}}+18\pi^{2}\frac{a^{5}}{\lambda^{5}}$
\cite{pathria2011statistical}.

Then the canonical partition function of a Bose gas is
\begin{equation}
Z_{B}\left(  \beta\right)  \simeq\frac{1}{N!}\left(  \frac{V}{\lambda^{3}%
}\right)  ^{N}+\frac{1}{\left(  N-2\right)  !}\left(  \frac{V}{\lambda^{3}%
}\right)  ^{N-1}\left(  2^{-5/2}-2\frac{a}{\lambda}-\frac{10\pi^{2}}{3}%
\frac{a^{5}}{\lambda^{5}}\right)  +\ldots, \label{1022}%
\end{equation}
and the canonical partition function of a Fermi gas is
\begin{equation}
Z_{F}\left(  \beta\right)  \simeq\frac{1}{N!}\left(  \frac{V}{\lambda^{3}%
}\right)  ^{N}+\frac{1}{\left(  N-2\right)  !}\left(  \frac{V}{\lambda^{3}%
}\right)  ^{N-1}\left(  -2^{-5/2}-6\pi\frac{a^{3}}{\lambda^{3}}+18\pi^{2}%
\frac{a^{5}}{\lambda^{5}}\right)  +\ldots, \label{1023}%
\end{equation}
respectively.

Then the counting function of a Bose gas by Eq. (\ref{count_partition}) is%
\begin{align}
&  N_{B}\left(  E\right) \nonumber\\
&  \simeq\frac{\left(  2\pi\right)  ^{3N/2}}{N!\Gamma\left(  1+3N/2\right)
}\frac{m^{3N/2}V^{N}}{h^{3N}}E^{3N/2}+\frac{2^{3N/2-4}\pi^{3\left(
N-1\right)  /2}}{\left(  N-2\right)  !\Gamma\left(  3N/2-1/2\right)  }%
\frac{m^{3\left(  N-1\right)  /2}V^{N-1}}{h^{3\left(  N-1\right)  }%
}E^{3N/2-3/2}\\
&  -\frac{2^{3N/2}\pi^{\left(  3N-2\right)  /2}a}{\left(  N-2\right)
!\Gamma\left(  3N/2\right)  }\frac{m^{\left(  3N-2\right)  /2}V^{N-1}%
}{h^{3N-2}}E^{3N/2-1}-\frac{10a^{5}\left(  2\pi\right)  ^{3N/2+3}}{3\left(
N-2\right)  !\Gamma\left(  3N/2+2\right)  }\frac{m^{\left(  3N+2\right)
/2}V^{N-1}}{h^{3N+2}}E^{3N/2+1}+\ldots, \label{1024}%
\end{align}
and the counting function of a Fermi gas by Eq. (\ref{count_partition}) is%
\begin{align}
&  N_{F}\left(  E\right) \nonumber\\
&  \simeq\frac{\left(  2\pi\right)  ^{3N/2}}{N!\Gamma\left(  1+3N/2\right)
}\frac{m^{3N/2}V^{N}}{h^{3N}}E^{3N/2}-\frac{2^{3N/2-4}\pi^{3N/2-3/2}}{\left(
N-2\right)  !\Gamma\left(  3N/2-1/2\right)  }\frac{m^{3N/2-\frac{3}{2}V^{N-1}%
}}{h^{3N-3}}E^{3N/2-3/2}\\
&  -\frac{3\times2^{3N/2+1}\pi^{3N/2+1}a^{3}}{\left(  N-2\right)
!\Gamma\left(  3N/2+1\right)  }\frac{m^{3N/2}V^{N-1}}{h^{3N}}E^{3N/2}%
+\frac{9\times2^{3N/2+2}\pi^{3N/2+3}a^{5}}{\left(  N-2\right)  !\Gamma\left(
3N/2+2\right)  }\frac{m^{3N/2+1}V^{N-1}}{h^{3N+2}}E^{3N/2+1}+\ldots.
\label{1025}%
\end{align}
Solving Eq. (\ref{count}) gives the eigenvalue of a hard-sphere Bose gas,%
\begin{equation}
E_{n}\simeq E_{n}^{cl}\left(  N!\right)  ^{2/3N}\left[  1+\frac{h_{1}%
\frac{a^{5}}{V^{5/3}}n^{5/\left(  3N\right)  }+2h_{2}\frac{a}{V^{1/3}%
}n^{1/\left(  3N\right)  }-2h_{3}}{h_{1}\frac{a^{5}}{V^{5/3}}n^{5/\left(
3N\right)  }+h_{4}n^{1/N}+5h_{2}\frac{a}{V^{1/3}}n^{1/\left(  3N\right)
}+3\left(  N-1\right)  h_{3}}\right]  \label{1026}%
\end{equation}
with $h_{1}=\frac{10\pi^{2}\left(  N!\right)  ^{5/\left(  3N\right)  }%
\Gamma^{5/\left(  3N\right)  }\left(  1+3N/2\right)  }{3\Gamma\left(
3N/2+2\right)  }$, $h_{2}=\frac{2\left(  N!\right)  ^{1/\left(  3N\right)
}\Gamma^{1/\left(  3N\right)  }\left(  1+3N/2\right)  }{\Gamma\left(
3N/2\right)  }$, $h_{3}=\frac{1}{2^{5/2}\Gamma\left(  3N/2-1/2\right)  }$, and
$h_{4}=\frac{3\left(  N!\right)  ^{1/N}\Gamma^{1/N-1}\left(  1+3N/2\right)
}{N-1}$. The eigenvalue of a hard-sphere Fermi gas,%
\begin{equation}
E_{n}\simeq E_{n}^{cl}\left(  N!\right)  ^{2/\left(  3N\right)  }\left[
1+\frac{-2t_{1}\frac{a^{5}}{V^{5/3}}n^{5/\left(  3N\right)  }+2t_{2}%
\frac{a^{3}}{V}n^{1/N}+2t_{3}}{\left(  2+3N\right)  t_{1}\frac{a^{5}}{V^{5/3}%
}n^{5/\left(  3N\right)  }+t_{4}n^{1/N}-3Nt_{2}\frac{a^{3}}{V}n^{1/N}-3\left(
N-1\right)  t_{3}}\right]  \label{1027}%
\end{equation}
with $t_{1}=\frac{18\pi^{2}\left(  N!\right)  ^{5/\left(  3N\right)  }%
\Gamma^{5/\left(  3N\right)  }\left(  1+3N/2\right)  }{\Gamma\left(
3N/2+2\right)  }$, $t_{2}=\frac{6\pi\left(  N!\right)  ^{1/\left(  3N\right)
}\Gamma^{1/\left(  3N\right)  }\left(  1+3N/2\right)  }{\Gamma\left(
3N/2+1\right)  }$, $t_{3}=\frac{1}{2^{5/2}\Gamma\left(  3N/2-1/2\right)  }$,
and $t_{4}=3\frac{\left(  N!\right)  ^{1/N}\Gamma^{1/N-1}\left(
1+3N/2\right)  }{N-1}$.

In order to illustrate the influence of the quantum exchange interaction, we
consider a system consisting of two bosons or two fermions. For the Bose case,
the eigenvalue (\ref{11011}) becomes%
\begin{align}
&  \left.  E_{n}\right\vert _{N=2}^{Bose}\overset{D=3}{\sim}12^{1/3}%
\frac{h^{2}}{2\pi mV^{2/3}}n^{1/3}\left[  1\right. \nonumber\\
&  \left.  +\frac{10\times2^{1/3}\times3^{5/6}\pi^{5/2}\frac{a^{5}}{V^{5/3}%
}n^{5/6}+36\times3^{1/6}\sqrt{\pi}\frac{a}{V^{1/3}}n^{1/6}-6\times2^{1/6}%
}{-40\times2^{1/3}\times3^{5/6}\pi^{5/2}\frac{a^{5}}{V^{5/3}}n^{5/6}%
-72\times3^{1/6}\sqrt{\pi}\frac{a}{V^{1/3}}n^{1/6}+9\times2^{1/6}\sqrt{6\pi
}\sqrt{n}+9\times2^{1/6}}\right]  ;
\end{align}
for the Fermi case, the eigenvalue (\ref{11011}) becomes%
\begin{align}
&  \left.  E_{n}\right\vert _{N=2}^{Fermi}\overset{D=3}{\sim}12^{1/3}%
\frac{h^{2}}{2\pi mV^{2/3}}n^{1/3}\left[  1\right. \\
&  \left.  +\frac{-18\times2^{2/3}\times3^{5/6}\pi^{5/2}\frac{a^{5}}{V^{5/3}%
}n^{5/6}+24\sqrt{3}\pi^{3/2}\frac{a^{3}}{V}\sqrt{n}+2\sqrt{2}}{72\times
2^{2/3}\times3^{5/6}\pi^{5/2}\frac{a^{5}}{V^{5/3}}n^{5/6}-72\sqrt{3}\pi
^{3/2}\times\frac{a^{3}}{V}\sqrt{n}+6\sqrt{3\pi}\sqrt{n}-3\sqrt{2}}\right]  .
\end{align}

\section{The one-dimensional Ising model \label{Ising}}

The eigenvalue of the one-dimensional Ising model without interactions can be
calculated from the corresponding canonical partition function directly.

\subsection{The Ising model without interactions}

For a one-dimensional $N$-particle Ising model without the interaction between
spins, the canonical partition function reads \cite{reichl2016modern}%
\begin{equation}
Z(\beta)=(e^{\beta B\mu}+e^{-B\mu\beta})^{N}, \label{1028}%
\end{equation}
where $B$ is the magnetic field strength and $\mu$ is the spin magnetic moment.

The counting function can be obtained by substituting Eq. (\ref{1028}) into
Eq. (\ref{count_partition}),%
\begin{equation}
N(E)=\sum_{s=1}^{N+1}\binom{N}{s-1}\theta\left(  E-B\mu\left[  2\left(
s-1\right)  -N\right]  \right)  , \label{1029}%
\end{equation}
where $\theta\left(  x\right)  $ is the step function. Substituting Eq.
(\ref{1029}) into Eq. (\ref{count}) and solving the equation give the
eigenvalue
\begin{equation}
E=B\mu\left[  2\left(  n-1\right)  -N\right]  \label{1030}%
\end{equation}
and the degree of degeneracy%
\begin{equation}
\omega(E)=\binom{N}{n-1}. \label{1031}%
\end{equation}

\subsection{The Ising model with nearest-neighbour interactions}

For a one-dimensional $N$-particle Ising model with nearest-neighbour
interactions, the Hamiltonian is $H=-\sum_{i}B\mu\sigma_{i}+\sum_{i}%
\epsilon\sigma_{i}\sigma_{i+1}$, where $\epsilon$ is the coupling constant.
The canonical partition function reads \cite{feynman1998statistical}%
\begin{equation}
Z\left(  \beta,N\right)  =\lambda_{+}^{N}+\lambda_{-}^{N}, \label{1033}%
\end{equation}
where%
\begin{equation}
\lambda_{\pm}=e^{\beta\epsilon}\left[  \cosh\left(  \beta\mu B\right)
\pm\sqrt{\cosh^{2}\left(  \beta\mu B\right)  -2e^{-2\beta\epsilon}\sinh\left(
2\beta\epsilon\right)  }\right]  .
\end{equation}

The counting function can be obtained by substituting Eq. (\ref{1033}) into
Eq. (\ref{count_partition}). In the following, we list the eigenvalues for
$N=3,4,5,\ldots,9$.

For example, for $N=3$, the counting function is%
\begin{align}
N\left(  E\right)   &  =\theta\left(  E-\left(  -3B\mu-3\epsilon\right)
\right)  +3\theta\left(  E-\left(  -B\mu+\epsilon\right)  \right) \nonumber\\
&  +3\theta\left(  E-\left(  B\mu+\epsilon\right)  \right)  +\theta\left(
E-\left(  3B\mu-3\epsilon\right)  \right)  ;
\end{align}
the eigenvalue $E$ and the degree of degeneracy $\omega$ are listed in Table
\ref{table1}.

\begin{table}[ptb]
\caption{The eigenvalue $E$ and the degree of degeneracy $\omega$ : $N=3$}%
\label{table1}%
\centering
\begin{tabular}
[c]{ccccc}\hline
$E$ & $-3B\mu-3\epsilon$ & $-B\mu+\epsilon$ & $B\mu+\epsilon$ & $3B\mu
-3\epsilon$\\\hline
$\omega$ & $1$ & $3$ & $3$ & $1$\\\hline
\end{tabular}
\end{table}

For other values of $N$, we only list the eigenvalues in Tables \ref{table2}%
-\ref{table7}.

\begin{table}[ptb]
\caption{The eigenvalue $E$ and the degree of degeneracy $\omega$ : $N=4$}%
\label{table2}%
\centering
\begin{tabular}
[c]{ccccccc}\hline
$E$ & $-4B\mu-4\epsilon$ & $-2B\mu$ & $0$ & $4\epsilon$ & $2B\mu$ &
$4B\mu-4\epsilon$\\\hline
$\omega$ & $1$ & $4$ & $4$ & $2$ & $4$ & $1$\\\hline
\end{tabular}
\end{table}\begin{table}[ptb]
\caption{The eigenvalue $E$ and the degree of degeneracy $\omega$ : $N=5$}%
\label{table3}%
\centering
\begin{tabular}
[c]{ccccccccc}\hline
$E$ & $-5B\mu-5\epsilon$ & $-3B\mu-\epsilon$ & $-B\mu+3\epsilon$ &
$-B\mu-\epsilon$ & $B\mu+3\epsilon$ &  &  & \\\hline
$\omega$ & $1$ & $5$ & $5$ & $5$ & $5$ &  &  & \\
$E$ & $B\mu-\epsilon$ & $3B\mu-\epsilon$ & $5B\mu-5\epsilon$ &  &  &  &  &
\\\hline
$\omega$ & $5$ & $5$ & $1$ &  &  &  &  & \\\hline
\end{tabular}
\end{table}

\begin{table}[ptb]
\caption{The eigenvalue $E$ and the degree of degeneracy $\omega$ : $N=6$}%
\label{table4}%
\centering
\begin{tabular}
[c]{cccccccccccc}\hline
$E$ & $-6B\mu-6\epsilon$ & $-4B\mu-2\epsilon$ & $-2B\mu+2\epsilon$ &
$-2B\mu-2\epsilon$ & $-2\epsilon$ &  &  &  &  &  & \\\hline
$\omega$ & $1$ & $6$ & $9$ & $6$ & $6$ &  &  &  &  &  & \\
$E$ & $2\epsilon$ & $6\epsilon$ & $2B\mu+2\epsilon$ & $2B\mu-2\epsilon$ &
$4B\mu-2\epsilon$ & $6B\mu-6\epsilon$ &  &  &  &  & \\\hline
$\omega$ & $12$ & $2$ & $9$ & $6$ & $6$ & $1$ &  &  &  &  & \\\hline
\end{tabular}
\end{table}

\begin{table}[ptb]
\caption{The eigenvalue $E$ and the degree of degeneracy $\omega$ : $N=7$}%
\label{table5}%
\centering
\begin{tabular}
[c]{ccccccccccccccc}\hline
$E$ & $-7B\mu-7\epsilon$ & $-5B\mu-3\epsilon$ & $-3B\mu+\epsilon$ &
$-3B\mu-3\epsilon$ & $-B\mu+5\epsilon$ & $-B\mu+\epsilon$ &  &  &  &  &  &  &
& \\\hline
$\omega$ & $1$ & $7$ & $14$ & $7$ & $7$ & $21$ &  &  &  &  &  &  &  & \\
$E$ & $-B\mu-3\epsilon$ & $B\mu+5\epsilon$ & $B\mu+\epsilon$ & $B\mu
-3\epsilon$ & $3B\mu+\epsilon$ & $3B\mu-3\epsilon$ &  &  &  &  &  &  &  &
\\\hline
$\omega$ & $7$ & $7$ & $21$ & $7$ & $14$ & $7$ &  &  &  &  &  &  &  & \\
$E$ & $5B\mu-3\epsilon$ & $7B\mu-7\epsilon$ &  &  &  &  &  &  &  &  &  &  &  &
\\\hline
$\omega$ & $7$ & $1$ &  &  &  &  &  &  &  &  &  &  &  & \\\hline
\end{tabular}
\end{table}

\begin{table}[ptb]
\caption{The eigenvalue $E$ and the degree of degeneracy $\omega$ : $N=8$}%
\label{table6}%
\centering
\begin{tabular}
[c]{ccccccccccccccc}\hline
$E$ & $-8B\mu-8\epsilon$ & $-6B\mu-4\epsilon$ & $-4B\mu$ & $-4B\mu-4\epsilon$
& $-2B\mu+4\epsilon$ & $-2B\mu$ &  &  &  &  &  &  &  & \\\hline
$\omega$ & $1$ & $8$ & $20$ & $8$ & $16$ & $32$ &  &  &  &  &  &  &  & \\
$E$ & $-2B\mu-4\epsilon$ & $8\epsilon$ & $-4\epsilon$ & $0$ & $4\epsilon$ &
$2B\mu$ &  &  &  &  &  &  &  & \\\hline
$\omega$ & $8$ & $2$ & $8$ & $36$ & $24$ & $32$ &  &  &  &  &  &  &  & \\
$E$ & $2B\mu+4\epsilon$ & $2B\mu-4\epsilon$ & $4B\mu-4\epsilon$ & $4B\mu$ &
$6B\mu-4\epsilon$ & $8B\mu-8\epsilon$ &  &  &  &  &  &  &  & \\\hline
$\omega$ & $16$ & $8$ & $8$ & $20$ & $8$ & $1$ &  &  &  &  &  &  &  & \\\hline
\end{tabular}
\end{table}

\begin{table}[ptb]
\caption{The eigenvalue $E$ and the degree of degeneracy $\omega$ : $N=9$}%
\label{table7}%
\centering
\begin{tabular}
[c]{ccccccccccccccc}\hline
$E$ & $-9B\mu-9\epsilon$ & $-7B\mu-5\epsilon$ & $-5B\mu-5\epsilon$ &
$-5B\mu-\epsilon$ & $-3B\mu-5\epsilon$ & $-3B\mu-\epsilon$ &  &  &  &  &  &  &
& \\\hline
$\omega$ & $1$ & $9$ & $9$ & $27$ & $9$ & $45$ &  &  &  &  &  &  &  & \\
$E$ & $-3B\mu+3\epsilon$ & $-B\mu-5\epsilon$ & $-B\mu-\epsilon$ &
$-B\mu+3\epsilon$ & $-B\mu+7\epsilon$ & $B\mu-5\epsilon$ &  &  &  &  &  &  &
& \\\hline
$\omega$ & $30$ & $9$ & $54$ & $54$ & $9$ & $9$ &  &  &  &  &  &  &  & \\
$E$ & $B\mu-\epsilon$ & $B\mu+3\epsilon$ & $B\mu+7\epsilon$ & $3B\mu
-5\epsilon$ & $3B\mu-\epsilon$ & $3B\mu+3\epsilon$ &  &  &  &  &  &  &  &
\\\hline
$\omega$ & $54$ & $54$ & $9$ & $9$ & $45$ & $30$ &  &  &  &  &  &  &  & \\
$E$ & $5B\mu-\epsilon$ & $5B\mu-5\epsilon$ & $7B\mu-5\epsilon$ &
$9B\mu-9\epsilon$ &  &  &  &  &  &  &  &  &  & \\\hline
$\omega$ & $27$ & $9$ & $9$ & $1$ &  &  &  &  &  &  &  &  &  & \\\hline
\end{tabular}
\end{table}

\section{The one-dimensional Potts model \label{Potts}}

The Potts model in statistical mechanics is a generalization of the Ising
model, which is a model of interacting spins on a crystalline lattice. The
Hamiltonian of the one-dimensional Potts model is $H=-J\sum_{\left\langle
i,j\right\rangle }\delta\left(  \sigma_{i},\sigma_{j}\right)  $, where
$\sigma_{i}=1,2,3,\ldots,q$, $\delta\left(  \sigma_{i},\sigma_{j}\right)  =1$
if $i=j$, and $\left(  \sigma_{i},\sigma_{j}\right)  =0$ other wise. The
canonical partition function of the one-dimensional Potts model of $N$
particles is \cite{mussardo2010statistical}%
\begin{equation}
Z\left(  \beta,N\right)  =q\left(  q-1+e^{\beta J}\right)  ^{N-1}.
\label{1034}%
\end{equation}
The counting function can be obtained by substituting Eq. (\ref{1034}) into
Eq. (\ref{count_partition}),%
\begin{equation}
N\left(  E\right)  =q\sum_{l=0}^{N-1}\binom{N-1}{l}\left(  q-1\right)
^{N-1-l}\theta\left(  Jl+E\right)  . \label{1035}%
\end{equation}
Substituting Eq. (\ref{1035}) into Eq. (\ref{count}) and solving the equation
give the eigenvalue%
\begin{equation}
E=nJ
\end{equation}
and the degeneracy%
\begin{equation}
\omega=q\binom{N-1}{n}\left(  q-1\right)  ^{N-1-n}-q\binom{N-1}{n-1}\left(
q-1\right)  ^{N-n}.
\end{equation}

\section{Conclusions \label{conclusion}}

In this paper, we suggest a method for calculating the eigenvalue of a
many-body system from the corresponding canonical partition function. The
advantage of the method is that it allows us to merely calculate the
eigenvalue without solving the eigenfunction simultaneously. Recalling that in
many approximate methods, although only needing the eigenvalue, one has to
solve the eigenfunction in the meantime. Solving eigenfunctions, however, is
always a difficult task.

In statistical mechanics, the calculation of thermodynamic quantities only
needs the knowledge of eigenvalues. Only the information of eigenvalue is
embodied in thermodynamic quantities. The method suggested in the present
paper is an approach for extracting the eigenvalue from the thermodynamic
quantity which obtained by statistical mechanical method.

In the present paper, we calculate the eigenvalue from the canonical partition
function. In future works, one can generalizes the method to calculate the
eigenvalue from the other thermodynamic quantities.

Moreover, we improve the result of the relation between the counting function
and the heat kernel given in \cite{dai2009number}.
\appendix

\section{The relation between counting functions and heat kernels
\label{Appendix}}

In Ref. \cite{dai2009number}, we provide a relation between the counting
function
\begin{equation}
N\left(  \lambda\right)  =\sum_{\lambda_{n}<\lambda}1\label{ANlambda}%
\end{equation}
and the global heat kernel
\begin{equation}
K\left(  t\right)  =\sum_{n}e^{-\lambda_{n}t}.\label{AKt}%
\end{equation}
In Ref. \cite{dai2009number}, we only consider the counting function which
counts the number of eigenstates with eigenvalue \textit{smaller} than a given
number $\lambda$, but a special case is ignored: the given number is just a
eigenvalue, i.e., $\lambda=\lambda_{n}$ with $\lambda_{n}$ the $n$-th
eigenvalue. In the following, we provide a complete version of the relation
between $N\left(  \lambda\right)  $ and $K\left(  t\right)  $.

\begin{theorem}%

\begin{numcases}{N\left(  \lambda\right)  =}
\frac{1}{2\pi i}\int_{c-i\infty}^{c+i\infty}K\left(  t\right)  \frac
{e^{\lambda t}}{t}dt,   \text{ when  }\lambda_{n}\neq\lambda \label{A2a}\\
\frac{1}{2\pi i}\int_{c-i\infty}^{c+i\infty}K\left(  t\right)  \frac
{e^{\lambda t}}{t}dt-\frac{1}{2},  \text{ when  }\lambda_{n}=\lambda \label{A2b}%
\end{numcases}

with $c>\lim_{n\rightarrow\infty}\frac{\ln n}{\lambda_{n}}$.
\end{theorem}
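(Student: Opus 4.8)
The plan is to substitute the heat-kernel series $K(t)=\sum_{n}e^{-\lambda_{n}t}$ into the Bromwich integral and integrate term by term, which reduces the whole statement to evaluating the single elementary contour integral
\begin{equation}
I(a):=\frac{1}{2\pi i}\int_{c-i\infty}^{c+i\infty}\frac{e^{at}}{t}\,dt ,\qquad a=\lambda-\lambda_{n}.
\end{equation}
The hypothesis $c>\lim_{n\to\infty}\frac{\ln n}{\lambda_{n}}$ is precisely what forces $\sum_{n}e^{-\lambda_{n}c}<\infty$, so on the vertical line $\operatorname{Re}t=c$ one has $\lvert e^{-\lambda_{n}t}\rvert=e^{-\lambda_{n}c}$ and the series for $K$ converges absolutely and uniformly there; this uniformity is the ingredient that licenses pulling the summation through the integral (carried out, to be careful, on the truncated integral $\int_{c-iT}^{c+iT}$ first, with $T\to\infty$ taken afterwards).

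Next I would evaluate $I(a)$ in the three regimes. For $a>0$, close the contour by a large left-hand rectangle: since $e^{a\operatorname{Re}t}\to0$ as $\operatorname{Re}t\to-\infty$, the horizontal pieces are $O(1/T)$ and vanish, so $I(a)$ equals the residue of $e^{at}/t$ at the simple pole $t=0$, i.e.\ $I(a)=1$. For $a<0$ one closes to the right, encloses no singularity, and gets $I(a)=0$. The decisive boundary case $a=0$ is done directly by setting $t=c+iy$:
\begin{equation}
I(0)=\frac{1}{2\pi}\int_{-\infty}^{\infty}\frac{dy}{c+iy}=\frac{1}{2\pi}\int_{-\infty}^{\infty}\frac{c-iy}{c^{2}+y^{2}}\,dy=\frac{1}{2\pi}\cdot c\cdot\frac{\pi}{c}=\frac{1}{2},
\end{equation}
the imaginary part dropping out by oddness. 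Hence $I(\lambda-\lambda_{n})$ equals $1$, $\tfrac12$, or $0$ according as $\lambda_{n}<\lambda$, $\lambda_{n}=\lambda$, or $\lambda_{n}>\lambda$.

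Summing over $n$ then gives
\begin{equation}
\frac{1}{2\pi i}\int_{c-i\infty}^{c+i\infty}K\left(t\right)\frac{e^{\lambda t}}{t}\,dt=\sum_{n}I(\lambda-\lambda_{n})=\#\{n:\lambda_{n}<\lambda\}+\tfrac12\,\#\{n:\lambda_{n}=\lambda\},
\end{equation}
which equals $N(\lambda)$ when $\lambda$ is not an eigenvalue and $N(\lambda)+\tfrac12$ when $\lambda$ coincides with the (nondegenerate) eigenvalue $\lambda_{n}$; rearranging the two cases gives \eqref{A2a} and \eqref{A2b}. The step I expect to be the real obstacle is the justification of the term-by-term integration: on $\operatorname{Re}t=c$ the integrand decays only like $1/\lvert t\rvert$, so the Bromwich integral is merely conditionally convergent and Fubini does not apply directly. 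I would handle this in the Perron-formula manner: work with the symmetric truncation $\int_{c-iT}^{c+iT}$, carry the finite partial sum $\sum_{n\le M}$ through it using the uniform convergence of $K$ on the line, and then control the iterated limit $M,T\to\infty$ via the standard estimate $\bigl\lvert\tfrac{1}{2\pi i}\int_{c-iT}^{c+iT}\tfrac{e^{at}}{t}\,dt-I(a)\bigr\rvert\lesssim\frac{e^{ac}}{T\lvert a\rvert}$ for $a\ne0$, together with the closed form already obtained at $a=0$.
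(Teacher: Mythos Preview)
Your proposal is correct and is essentially the same Perron-formula argument as the paper's: both reduce to the evaluation of $\frac{1}{2\pi i}\int_{c-iT}^{c+iT}\frac{e^{at}}{t}\,dt$ in the three cases $a>0$, $a=0$, $a<0$ and then sum over $n$. The paper merely packages this as a specialization of the general Dirichlet-series identity (writing $\mu_n=e^{\lambda_n}$, $x=e^{\lambda}$ and citing Tenenbaum for the integral), whereas you work directly with the heat kernel and supply the contour evaluation and the interchange justification yourself; mathematically the two are the same proof.
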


\begin{proof}
The function%
\begin{equation}
f\left(  s\right)  =\sum_{n=1}^{\infty}\frac{a_{n}}{\mu_{n}^{s}} \label{A}%
\end{equation}
is a generalization of the Dirichlet series. The function
\begin{equation}
f\left(  s+\omega\right)  =\sum_{n=1}^{\infty}\frac{a_{n}}{\mu_{n}^{s+\omega}}
\label{A8}%
\end{equation}
is uniformly convergent when $\sigma>\beta-c$, where $c=\operatorname{Re}%
\omega$ and $\beta$ is the abscissa of absolute convergence of this Dirichlet
series.\ Performing the integral $\frac{1}{2\pi i}\int_{c-iT}^{c+iT}%
\frac{x^{\omega}}{\omega}d\omega$ on both sides of Eq. (\ref{A8}) gives
\begin{equation}
\frac{1}{2\pi i}\int_{c-iT}^{c+iT}f\left(  s+\omega\right)  \frac{x^{\omega}%
}{\omega}d\omega=\frac{1}{2\pi i}\sum_{n=1}^{\infty}\frac{a_{n}}{\mu_{n}^{s}%
}\int_{c-iT}^{c+iT}\left(  \frac{x}{\mu_{n}}\right)  ^{\omega}\frac{d\omega
}{\omega}. \label{A4}%
\end{equation}
The integral in the right-hand side of Eq. (\ref{A4}), $\frac{1}{2\pi i}%
\int_{c-iT}^{c+iT}\left(  \frac{x}{\mu_{n}}\right)  ^{\omega}\frac{d\omega
}{\omega}$, should be considered in different situations, i.e., $\mu_{n}<x$,
$\mu_{n}=x$, and $\mu_{n}>x$. In the limitation $T\rightarrow\infty$, the
integral reads \cite{tenenbaum2015introduction}
\begin{equation}
\lim_{T\rightarrow\infty}\frac{1}{2\pi i}\int_{c-iT}^{c+iT}\left(  \frac
{x}{\mu_{n}}\right)  ^{\omega}\frac{d\omega}{\omega}=\left\{
\begin{array}
[c]{c}%
\displaystyle1,\text{ }\mu_{n}<x,\\
\displaystyle\frac{1}{2},\text{ }\mu_{n}=x,\\
\displaystyle 0,\text{ }\mu_{n}>x.
\end{array}
\right.  \label{A3}%
\end{equation}
Substituting Eq. (\ref{A3}) into Eq. (\ref{A4}) gives
\begin{equation}
\frac{1}{2\pi i}\int_{c-i\infty}^{c+i\infty}f\left(  s+\omega\right)
\frac{x^{\omega}}{\omega}d\omega=\left\{
\begin{array}
[c]{c}%
\displaystyle\sum_{\mu_{n}<x}^{\infty}\frac{a_{n}}{\mu_{n}^{s}}\text{, when }\mu_{n}\neq
x,\\
\displaystyle\sum_{\mu_{n}<x}^{\infty}\frac{a_{n}}{\mu_{n}^{s}}+\frac{1}{2}\frac{a_{n_{x}}%
}{x^{s}}\text{, when }\mu_{n}=x.
\end{array}
\right.  \label{A88}%
\end{equation}
Setting $s=0$, $a_{n}=1$, $\mu_{n}=e^{\lambda_{n}}$, $x=e^{\lambda}$, and
$\omega=t$ in Eq. (\ref{A88}) gives%
\begin{equation}
\frac{1}{2\pi i}\int_{c-i\infty}^{c+i\infty}f\left(  t\right)  \frac
{e^{\lambda t}}{t}d\omega=\left\{
\begin{array}
[c]{c}%
\displaystyle\sum_{\lambda_{n}<\lambda}^{\infty}1,\text{ when }\lambda_{n}\neq\lambda,\\
\displaystyle\sum_{\lambda_{n}<\lambda}^{\infty}1+\frac{1}{2},\text{ when }\lambda
_{n}=\lambda.
\end{array}
\right.  \label{A99}%
\end{equation}
Comparing the definition of the counting function and the global heat kernel,
Eqs. (\ref{ANlambda}) and (\ref{AKt}), with Eq. (\ref{A99}) proves Eqs.
(\ref{A2a}) and (\ref{A2b}).
\end{proof}

\acknowledgments

We are very indebted to Dr G. Zeitrauman for his encouragement. This work is supported in part by NSF of China under Grant
No. 11575125 and No. 11675119.










\providecommand{\href}[2]{#2}\begingroup\raggedright\endgroup


\end{document}